\documentclass[a4paper,10pt]{amsart}

\usepackage{amsmath,amssymb,amsthm,a4wide}

\usepackage{tikz}
\usetikzlibrary{shapes}
\usetikzlibrary{intersections}

\usepackage{graphicx}
\usepackage{graphics}

\parindent 0pt

\newtheorem*{thm}{Theorem}

\begin{document}

\title[Filtering of Markov chains and spectral embedding]{A filtering technique for Markov chains \\
with applications to spectral embedding}
\author{Stefan Steinerberger}
\thanks{Stefan Steinerberger, Department of Mathematics, Yale University, 10 Hillhouse Avenue, New Haven, CT 06511, USA, e-mail: \textsc{stefan.steinerberger@yale.edu}, phone number: 475-202-2136}

\begin{abstract} 
Spectral methods have proven to be a highly effective tool in understanding the intrinsic geometry of a high-dimensional data set $\left\{x_i \right\}_{i=1}^{n} \subset \mathbb{R}^d$. The
key ingredient is the construction of a Markov chain on the set, where transition probabilities depend on the distance between elements, for example where for every $1 \leq j \leq n$ the 
probability of going from
$x_j$ to $x_i$ is proportional to
$$ p_{ij} \sim \exp \left( -\frac{1}{\varepsilon}\|x_i -x_j\|^2_{\ell^2(\mathbb{R}^d)}\right) \qquad \mbox{where}~\varepsilon>0~\mbox{is a free parameter}.$$
We propose a method which increases the self-consistency of such Markov chains before spectral methods are applied. 
Instead of directly using a Markov transition matrix $P$, we set $p_{ii} = 0$ and rescale, thereby obtaining a transition matrix $P^*$ modeling a non-lazy random walk.
We then create a new transition matrix $Q = (q_{ij})_{i,j=1}^{n}$ by demanding that for fixed $j$ the quantity $q_{ij}$ be proportional to
$$ q_{ij} \sim \min((P^*)_{ij}, ((P^*)^2)_{ij}, \dots, ((P^*)^k)_{ij}) \qquad \mbox{where usually}~ k=2.$$
We consider several classical data sets, show that this simple method can increase the efficiency of spectral methods and prove that it can correct randomly introduced errors in the kernel.
\end{abstract}
\maketitle

\section{Introduction}
We study the interplay between Markov chains on a high-dimensional data set $\left\{x_i\right\}_{i=1}^{n} \subset \mathbb{R}^d$ and the inner workings of spectral methods. There are many different methods, see e.g. the work of Belkin \& Niyogi \cite{belk}, Coifman \& Lafon \cite{coif1}, Coifman \& Maggioni \cite{coif2}, Donoho \& Grimes \cite{donoho},
Roweis \& Saul \cite{rs} and  Tenenbaum, de Silva \& Langford \cite{ten}. We propose a natural operation on the data which will enhance the effectivity of these methods. Usually,
these techniques proceed by imposing a Markov chain on the data set and analyzing diffusion on the arising graph. A popular and natural choice for the Markov chain is to
declare that the probability $p_{ij}$ to move from point $x_j$ to $x_i$ is
$$ p_{ij} =  \frac{ \exp\left(-\frac{1}{\varepsilon}\|x_i - x_j\|^2_{\ell^2(\mathbb{R}^d)}\right)}{\sum_{k=1}^{n}{ \exp\left(-\frac{1}{\varepsilon}\|x_k - x_j\|^2_{\ell^2(\mathbb{R}^d)}\right)}},$$
where the value of $\varepsilon$ needs to be chosen depending on the given data as it induces a natural length scale $\sim \sqrt{\varepsilon}$ which should match the distance between neighbouring points. 
If the number of
random jumps tends to infinity, the probability of being in a particular point is uniformly distributed and contains no more information on the geometry: the stationary solution
$$ P\pi = \pi \quad \mbox{has only the trivial solution} \quad \pi = (1,1,1,\dots, 1)^t.$$ 

It is not difficult to see that the initial states converging to the stationary state as slowly as possible are of particular intrinsic interest because they are exploiting geometric
bottlenecks in the data to avoid rapid mixing.
Starting from the seminal work of Cheeger \cite{ch}, it is now
understood that the first (nontrivial) eigenfunction of the Laplacian
will change its sign at a region of geometrical significance (see Figure 1).
\begin{figure}[h!]
\includegraphics[width=6cm]{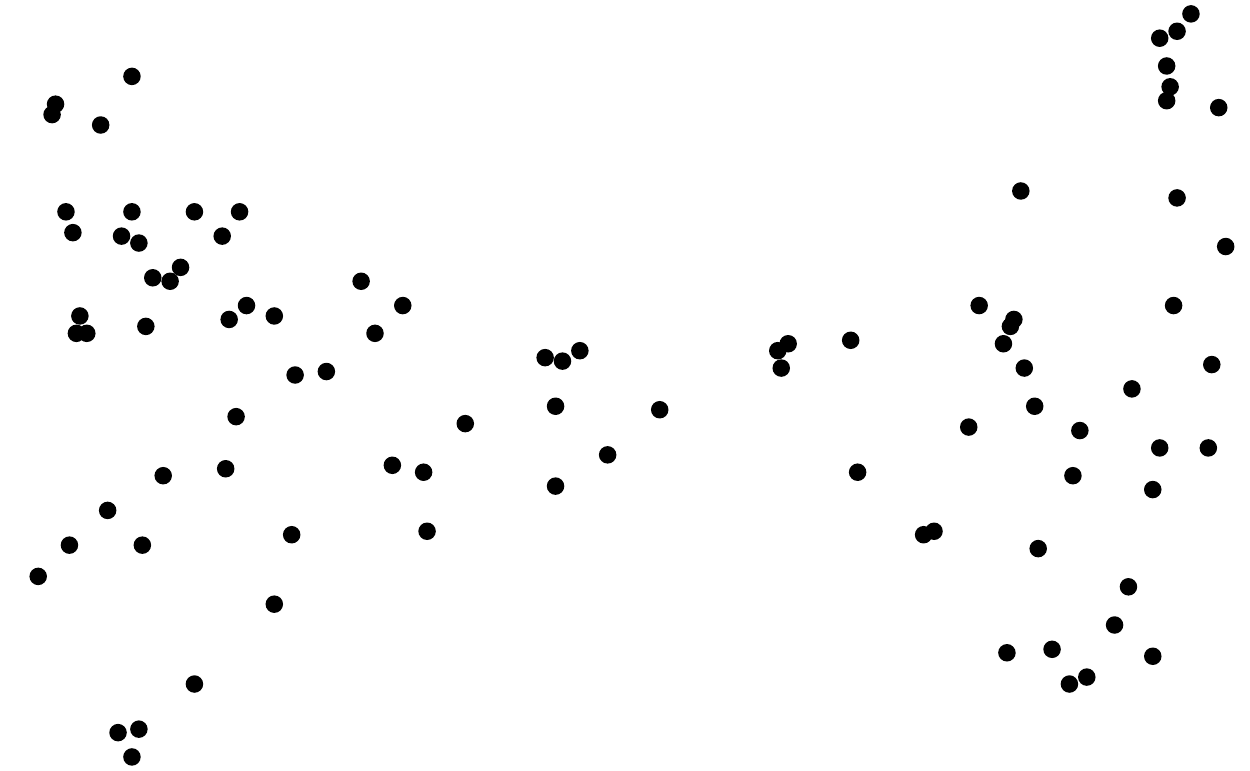}
\caption{Two regions of points connected by a bottleneck. The diffusion distance respects the bottleneck (a random walk started in the left part is more likely, at least for short time, to remain there than to cross the bottleneck) while Euclidean distance does not -- furthermore, the first eigenfunction will switch its sign in the bottleneck and allow for an identification of both regions.}
\end{figure}
Given $k$ eigenfunctions on the graph, we may use them to construct the embedding
\begin{align*}
\Phi: \left\{x_i\right\}_{i=1}^{n} &\rightarrow \mathbb{R}^k \\
x &\rightarrow  (\phi_1(x), \phi_2(x), \dots, \phi_k(x)).
\end{align*}

At this level of generality, it is not possible to make any qualitative statements about the quality of the embedding depending on the data set and $k$. However, it has been shown that in
practice already a very small number of eigenfunction is able to reliably extract relevant geometric features from the data set.

\section{Rewarding self-consistency in the Markov chain}
\subsection{Motivation.} We are assuming that the point set $\left\{x_i\right\}_{i=1}^{n}$ comes from a (possibly very low-dimensional) manifold embedded in $\mathbb{R}^d$. The following example
is representative: assume we are given a point $x_1$ embedded in a point cloud given in Figure 2. 
\begin{center}
\begin{figure}[h!]
\begin{tikzpicture}[scale=1.3]
\draw [fill, ultra thick] (0,1) circle [radius=0.03];
\draw [fill, ultra thick] (1,0) circle [radius=0.03];
\draw [fill, ultra thick] (1,1) circle [radius=0.03];
\draw [fill, ultra thick] (1,2) circle [radius=0.03];

\draw [fill, ultra thick] (-1,1) circle [radius=0.03];
  \draw[ thick] (0,1) --(-1, 1);

  \draw[ thick] (0,1) --(1, 0);
  \draw[ thick] (0,1) --(1, 1);
  \draw[ thick] (0,1) --(1, 2);
  \draw[ thick] (1,1) --(1, 2);
  \draw[ thick] (1,1) --(1, 0);
  \draw[ thick] (1,1) --(1, 0);
 \node[right] at (-0.3,0.7) {$x_1$};
  \draw [ thick] (1,2) to[out=325,in=35] (1,0);
 \node[right] at (-1.3,0.7) {$x_5$};

 \node[right] at (1.1,2.1) {$x_2$};
 \node[right] at (1.03,1) {$x_3$};
 \node[right] at (1.1,-0.1) {$x_4$};
\end{tikzpicture}
\caption{A graph representing local structure around $x_1$.}
\end{figure}
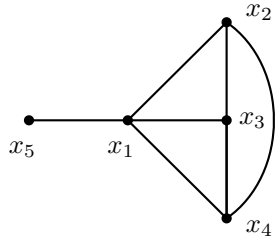
\end{center}
\vspace{-15pt}
Given such a local structure, one would certainly assume that $x_1, x_2, x_3, x_4$ belong to points on the manifold, which are close to each other; $x_5$ is close to $x_1$ but not close
to any of the other points. It seems that, whatever the hidden manifold $\mathcal{M} \subset \mathbb{R}^d$ may be, the guess $x_1,x_2, x_3, x_4 \in \mathcal{M}$ is much more reasonable than
$x_1, x_5 \in \mathcal{M}$. As it turns out, there is a very simple trick that allows to detect such structures.
Consider the probability of being in one of the five points after 1 and 2 steps of a random walk starting in $x_1$.
\begin{center}
\begin{figure}[h!]
\begin{tabular}{ l c c c c r }
  probability of being in  & $x_1$ & $x_2$ & $x_3$ & $x_4$ & $x_5$ \\
after 1 step & 0 & 1/4 & 1/4 & 1/4 & 1/4\\
after 2 steps & 1/2 & 1/6 & 1/6 & 1/6 & 0\\
minimum & 0 & 1/6 & 1/6 & 1/6 & 0
\end{tabular}
\caption{The probability of being in a point after 1 and 2 steps: it automatically selects the 'likely' neighbours $x_2, x_3, x_4$.}
\end{figure}
\end{center}
\vspace{-19pt}
The minimum can be interpreted as a measure of how easy it is to reach a certain point in \textit{both} one and two steps starting from $x_1$. If an adjacent point is relatively easy to reach in 
one step but impossible or very difficult to reach in two, then that point may be as close or even closer to $x_1$ as other points but it will not be close to those other points. It should be emphasized
that for this reasoning to be valid we need to ensure that
\begin{itemize}
\item the random walk is not lazy: it must not be possible to remain at a position and
\item that the random walk is interconnected enough.
\end{itemize}
For Markov chains created using the isotropic kernels with a Gaussian weight, the second condition is always satisfied and the first one can be readily satisfied by setting $p_{ii} = 0$ for all
$1 \leq i \leq n$ (and rescaling the other elements in the column).

\subsection{The method} Formally, our algorithm may be described as follows. Assume we are given $\left\{x_i\right\}_{i=1}^{n} \subset \mathbb{R}^d$ and an associated Markov chain described
by the matrix $P = (p_{ij})_{i,j=1}^{n}$. The matrix $Q$ will be created as follows: we obtain the matrix $P^*$ by setting $p_{ii} = 0$ and rescaling every column of $P$ so that to we are once again given 
a transition matrix. $Q$ is then given by
$$  \boxed{   q_{ij} = \frac{\min((P^*)_{ij}, ((P^*)^2)_{ij}, \dots, ((P^*)^k)_{ij}),}{\sum_{m=1}^{n}{\min((P^*)_{mj}, ((P^*)^2)_{mj}, \dots, ((P^*)^k)_{mj})}}.  }$$
We may then proceed with an analysis of the data set using $Q$ instead of $P$. It seems that $k=2$ is most effective in practice but there are certainly
cases where a larger $k$ may prove advantageous.

\section{Error detection.} The method provides a tool
to identify erroneous edges in graph structures satisfying the natural analogue of a manifold assumption for the point cloud: no point has too many neighbours and neighbours of neighbours of neighbours tend to be neighbours or at least close. It needs to be emphasized that the method relies strongly on the structure/kernel used in spectral embedding.
In the (unweighted) example below, the method would remove the dashed line, which may or may not be reasonable -- for the relevant applications
we have in mind, the method will always start from a \textit{complete, weighted} graph and the method will assign new weights to the edges.

\begin{center}
\begin{figure}[h!]
\begin{tikzpicture}[scale=1.3]
\draw [fill, ultra thick] (0,1) circle [radius=0.03];
\draw [fill, ultra thick] (1,0) circle [radius=0.03];
\draw [fill, ultra thick] (1,1) circle [radius=0.03];
\draw [fill, ultra thick] (1,2) circle [radius=0.03];

\draw [fill, ultra thick] (-1,1) circle [radius=0.03];
\draw [fill, ultra thick] (-2,2) circle [radius=0.03];
\draw [fill, ultra thick] (-2,0) circle [radius=0.03];
\draw [ thick] (-2, 0) -- (-2,2) -- (-1,1) -- (-2,0);
  \draw[dashed, thick] (0,1) --(-1, 1);

  \draw[ thick] (0,1) --(1, 0);
  \draw[ thick] (0,1) --(1, 1);
  \draw[ thick] (0,1) --(1, 2);
  \draw[ thick] (1,1) --(1, 2);
  \draw[ thick] (1,1) --(1, 0);
  \draw[ thick] (1,1) --(1, 0);

  \draw [ thick] (1,2) to[out=325,in=35] (1,0);

\end{tikzpicture}
\caption{$Q$ will remove the dashed line.}
\end{figure}
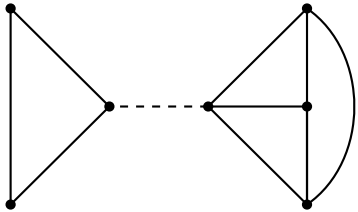
\end{center}

\subsection{Earlier results.} 
There is a lot of work on the fundamental problem of analyzing graphs with respect to erroneous edges (see e.g. \cite{ref1,ref2}). We emphasize that our approach 
should not been seen from that perspective because it crucially relies on the particular structure arising from the analysis of point clouds in $\mathbb{R}^n$ by means
of a nonvanishing kernel. It will, without further adaptation, not be useful in the more general graph setting: for example, it is easy to see that on the lattice $\mathbb{Z}^2$ 
with the natural edge relation
$$ (a,b) \sim (c,d) \qquad \mbox{if and only if} ~ |a-c| + |b-d| = 1$$
the method will remove all edges. However, it is certainly conceivable that there are natural ways (for instance weighting the edges proportional to $\exp{(-\mbox{dist}(x,y))}$), the method might 
also be useful in that context. As for the special problem of spectral embeddings, there is paper of David \& Averbuch \cite{av} whose approach is very different from ours.

\subsection{Setup.} The purpose of this section is a simple argument quantifying the ability of our method to detect and correct errors -- an explicit example is given in the next section.
The model is as follows: let $G = (V, E)$ be an unweighted graph on $|V| = n$ vertices such that for every $v \in V$ the number of points at distance $\leq 2$ from $v$ is at most $c \ll n$.
 let $P_1$ be the transition matrix of the (non-lazy) random walk on $G$. 
Now we create a random perturbation $P_2$ of $P_1$ by randomly adding edges. Let $0 \leq p < 1$ be a fixed
parameter and create $P_2$ by additionally connecting every pair of vertices $(v_1, v_2) \in V\times V$ with probability $p$ and taking $P_2$ to be the transition matrix of the non-lazy random 
walk on the enlarged graph. Let $Q$ be the modified Markov chain obtained from $P_2$ using our method with $k=2$. We note that $Q$ may separate connections which have
not been randomly added (see the example of the lattice), however, it is certainly very successful at removing those that have been added. When using a Gaussian kernel on
point clouds in applications, the triangle inequality will already guarantee that such structures cannot arise and one can expect $Q$ to rarely mistake an edge for an error. \\

Our main statement gives an upper bound on the expected number of random edges that have been added and are not being removed by $Q$.
Setting $p \sim \varepsilon/n$ for $\varepsilon$ small, it implies that up to $\sim \varepsilon n$ random edges will be completely filtered out with high likelihood $\geq 1- c \varepsilon$.
Current techniques \cite{ref1,ref2} for the generalized problem on graphs seem to be effective up to the same order (i.e. work for a number of randomly added edges that is linear in the number
of vertices).

\begin{thm} The number $N$ of vertices $(v_i, v_j) \in V \times V$ with
$$ (P_1)_{i,j} = 0 \qquad \mbox{and} \qquad Q_{ij} > 0$$
satisfies
$$ \mathbb{E}N \leq c n p + c n^2 p^2 + \frac{1}{2}n^3 p^3.$$
\end{thm}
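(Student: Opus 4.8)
The plan is to work directly with the combinatorial meaning of $Q_{ij}>0$ and then estimate the resulting probability by linearity of expectation together with a union bound over potential common neighbours. Since $P_2$ is already the transition matrix of a non-lazy walk we have $P_2^{*}=P_2$, so with $k=2$ the entry $Q_{ij}$ is merely a column normalization of $\min((P_2)_{ij},(P_2^2)_{ij})$. As the normalization is positive and does not affect positivity of an entry, $Q_{ij}>0$ holds precisely when $(P_2)_{ij}>0$ and $(P_2^2)_{ij}>0$, i.e.\ when $v_i,v_j$ are joined by an edge of the enlarged graph \emph{and} possess at least one common neighbour there. The first step is therefore to record the reformulation: the pairs counted by $N$ are exactly the pairs $\{v_i,v_j\}$ with $v_iv_j\notin E$ such that the edge $v_iv_j$ was added during the perturbation and $v_i,v_j$ acquire a common neighbour in the enlarged graph.

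Next I would isolate the two sources of randomness. Writing $A_{ij}$ for the event that the edge $v_iv_j$ is added and $C_{ij}$ for the event that $v_i,v_j$ have a common neighbour, I note that $A_{ij}$ depends only on the single edge variable attached to the pair $\{i,j\}$, whereas $C_{ij}$ depends only on the edge variables attached to pairs $\{i,k\},\{j,k\}$ with $k\neq i,j$ (together with the deterministic graph $G$). These variable sets are disjoint, so $A_{ij}$ and $C_{ij}$ are independent and $\mathbb{P}(Q_{ij}>0)=p\cdot\mathbb{P}(C_{ij})$. Linearity of expectation then gives $\mathbb{E}N = p\sum_{\{i,j\}:\,v_iv_j\notin E}\mathbb{P}(C_{ij})$.

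The heart of the argument is a union bound on $\mathbb{P}(C_{ij})$ organized by how many \emph{new} edges a common neighbour $v_k$ requires, writing $d_i$ for the degree of $v_i$ in $G$. If $v_i,v_j$ are at distance $2$ in $G$ they already share a neighbour and $\mathbb{P}(C_{ij})=1$. Otherwise every common neighbour $v_k$ needs at least one added edge: if $v_k$ is a $G$-neighbour of exactly one of $v_i,v_j$ a single added edge suffices (probability $p$, with at most $d_i+d_j$ such $k$), while if $v_k$ is a $G$-neighbour of neither, two added edges are required (probability $p^2$, with at most $n$ such $k$). This yields the single clean inequality, valid for every pair, $\mathbb{P}(C_{ij})\leq \mathbf{1}[\mathrm{dist}_G(v_i,v_j)=2]+(d_i+d_j)p+np^2$. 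Summing against the prefactor $p$ over all non-edge pairs, and using that there are at most $cn$ pairs at distance $2$, that $\sum_i d_i\leq cn$ for the middle term, and that there are at most $\binom{n}{2}\leq \tfrac12 n^2$ non-edge pairs for the last, produces the three terms $cnp$, $cn^2p^2$ and $\tfrac12 n^3p^3$ respectively.

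The summation itself is routine; the step demanding the most care is the bookkeeping of constants. In particular, one must notice that the correct object behind the cubic term is the unordered triple $\{v_i,v_j,v_k\}$ — equivalently a triangle all of whose edges are new — which is what produces the factor $\tfrac12$ rather than $1$. I do not expect a genuine obstacle, but the two points to verify carefully are that the distance-$2$ case and the union-bound case really do combine into the single inequality above without double counting, and that the independence used in the second step survives the definition of $Q$; the latter holds because positivity of $Q_{ij}$ is insensitive to the column normalization and hence depends only on the disjoint edge families governing $A_{ij}$ and $C_{ij}$.
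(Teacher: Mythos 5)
Your proof is correct and yields exactly the stated bound, and at the level of the case decomposition it coincides with the paper's: your indicator term (the common neighbour already exists in $G$, so only the edge $v_iv_j$ is new) is the paper's Case~2 and gives $cnp$; your $(d_i+d_j)p$ term (one further added edge) is the paper's Case~3 and gives $cn^2p^2$; and your $np^2$ term (two further added edges, i.e.\ an all-new triangle) is the paper's Case~1, with the same bookkeeping $3\binom{n}{3}p^3 \leq \tfrac{1}{2}n^3p^3$ coming from counting unordered pairs, three per triangle. Where you genuinely diverge is in the probabilistic execution. The paper fixes a vertex, models the relevant counts as compound binomials $\mathcal{B}\bigl(\mathcal{B}(n-1,p), \tfrac{c}{n-1}\bigr) = \mathcal{B}\bigl(n-1,\tfrac{pc}{n-1}\bigr)$, and invokes Wald's identity for the expectation of a random sum of random variables; you instead work pair by pair, observe that the event ``the edge $v_iv_j$ was added'' is independent of ``$v_i,v_j$ have a common neighbour in the enlarged graph'' because the two events are measurable with respect to disjoint families of edge variables, and then apply a plain union bound organized by how many new edges a witnessing common neighbour requires. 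Your route is more elementary (no Wald, no compound-binomial identities), makes explicit an independence point the paper uses only implicitly, and avoids the notational drift in the paper's Case~3, where the roles of $x$, $y$, $z$ are permuted mid-argument; what the paper's heavier machinery buys is distributional information (the error counts are identified as binomials rather than merely bounded in expectation), though only expectations are used in the end. Your two flagged verification points are indeed the right ones to check, and both go through: positivity of $Q_{ij}$ survives column normalization because $P_2$ is already non-lazy (so $P_2^* = P_2$), and the distance-$2$ case combines with the union-bound cases into your single inequality since when $\mathrm{dist}_G(v_i,v_j)=2$ one simply bounds $\mathbb{P}(C_{ij}) \leq 1$.
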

\begin{proof} We start by analyzing the implications of
$$ (P_1)_{i,j} = 0 \qquad \mbox{and} \qquad Q_{ij} > 0.$$
Let us abbreviate $x = v_i$ and $y = v_j$. $(P_1)_{i,j} = 0$ implies that $x$ and $y$ are not connected by an edge. $Q_{ij} > 0$ implies that in $P_2$ there is
both an edge between $x$ and $y$ as well as an edge $z$ with $x \sim z$ and $z \sim y$. The edge $x \sim y$ must be a randomly added edge but the
status of the edges $x \sim z$ and $z \sim y$ cannot be deduced. We proceed with a case distinction depending on which edges have been there originally.\\

\begin{figure}[!htb]
\minipage{0.25\textwidth}
\begin{center}
\begin{tikzpicture}[scale=1.3]
\draw [fill, ultra thick] (0,0) circle [radius=0.03];
\draw [fill, ultra thick] (1,0) circle [radius=0.03];
\draw [fill, ultra thick] (0.5,0.866) circle [radius=0.03];
 \node[below] at (0.5,-0.2) {Case 1};
 \node[right] at (1.1,0) {$y$};
 \node[left] at (-0.1,0) {$x$};
 \node[above] at (1/2,0.87) {$z$};
\draw [dashed] (0,0) -- (1,0) -- (0.5, 0.866) -- (0,0);
\end{tikzpicture}
\end{center}
\endminipage\hfill
\minipage{0.25\textwidth}
\begin{center}
\begin{tikzpicture}[scale=1.3]
\draw [fill, ultra thick] (0,0) circle [radius=0.03];
\draw [fill, ultra thick] (1,0) circle [radius=0.03];
\draw [fill, ultra thick] (0.5,0.866) circle [radius=0.03];
 \node[below] at (0.5,-0.2) {Case 2};
 \node[right] at (1.1,0) {$y$};
 \node[left] at (-0.1,0) {$x$};
 \node[above] at (1/2,0.87) {$z$};
\draw [dashed] (0,0) -- (1,0);
\draw  [thick] (1,0) -- (0.5, 0.866) -- (0,0);
\end{tikzpicture}
\end{center}
\endminipage\hfill
\minipage{0.49\textwidth}%
\begin{center}
\begin{tikzpicture}[scale=1.3]
\draw [fill, ultra thick] (0,0) circle [radius=0.03];
\draw [fill, ultra thick] (1,0) circle [radius=0.03];
\draw [fill, ultra thick] (0.5,0.866) circle [radius=0.03];
 \node[right] at (1.1,0) {$y$};
 \node[left] at (-0.1,0) {$x$};
 \node[above] at (1/2,0.87) {$z$};
\draw [dashed] (0,0) -- (1,0) -- (0.5, 0.866);
\draw  [thick]  (0.5, 0.866) -- (0,0);
 \node[below] at (1.75,-0.2) {Case 3};
\draw [fill, ultra thick] (2.5,0) circle [radius=0.03];
\draw [fill, ultra thick] (3.5,0) circle [radius=0.03];
\draw [fill, ultra thick] (3,0.866) circle [radius=0.03];
 \node[right] at (3.6,0) {$y$};
 \node[left] at (2.5-0.1,0) {$x$};
 \node[above] at (2.5+1/2,0.87) {$z$};
\draw [dashed] (3, 0.866) -- (2.5,0) -- (3.5, 0);
\draw  [thick] (3.5,0) -- (3, 0.866);

\end{tikzpicture}
\end{center}
\endminipage
\end{figure}

\textit{Case 1.} All three edges $x \sim y$, $x \sim z$ and $z \sim y$ have been added by the random process. In this case adding the random edges has given rise to a
triangle. It is easily seen by linearity of expectation that the expected number of triangles is given by
$$ \binom{n}{3}p^3 \leq \frac{n^3 p^3}{6}$$
and every rectangle accounts for three pairs of $(i,j)$ which will not be filtered out.\\

\textit{Case 2.} The edge $x \sim y$ has been added but $x \sim z$ and $z \sim y$ did already exist. 

This implies that $y$ is at distance 2
from $x$. The number of random connections emanating from $x$ is given by a binomial distribution with distribution
$$ k \sim \mathcal{B}(n-1-\delta(x),p), ~\mbox{where}~\mathcal{B}~\mbox{is the binomial distribution}$$
and $\delta(x)$ is the degree of $x$ in the original graph.
For every such new edge the probability of connecting to one of the less than $c$ vertices at distance at most 2 from $x$ is at most $c/(n-1)$ and thus
assuming the worst case $(\delta(x) = 0)$ and using the reproducing property of the binomial distribution, we are able to simpify the arising expression 
$$ \mathcal{B}\left(\mathcal{B}(n-1,p),\frac{c}{n-1}\right)= \mathcal{B}\left(n-1, \frac{pc}{n-1}\right).$$
This computation is for a fixed vertex $x$, using the linearity of expectation bounds the number of errors introduced through Case 2 by
$$ n\cdot \mathbb{E} \mathcal{B}\left(n-1, \frac{pc}{n-1}\right) = npc.$$

\textit{Case 3.} The edge $x \sim y$ and one of the two edges  $x \sim z$ or $z \sim y$ has been added while the other one did already exist.
We change our point of view. Fixing a vertex $z$, what can be said about the probability that there is another vertex $x$ such that both
edges $z \sim x$ and $x \sim y$ have been added for some $y$ at distance 1 from $x$?

\begin{center}
\begin{figure}[h!]
\begin{tikzpicture}[scale=1.3]
\draw [fill, ultra thick] (0,1) circle [radius=0.03];
\draw [fill, ultra thick] (1,0) circle [radius=0.03];
\draw [fill, ultra thick] (1,0.5) circle [radius=0.03];
\draw [fill, ultra thick] (1,1) circle [radius=0.03];
\draw [fill, ultra thick] (1,1.5) circle [radius=0.03];
\draw [fill, ultra thick] (1,2) circle [radius=0.03];

  \draw[dashed, thick] (0,1) --(1, 0);
  \draw[dashed, thick] (0,1) --(1, 0.5);
  \draw[dashed, thick] (0,1) --(1, 1);
  \draw[dashed, thick] (0,1) --(1, 1.5);
  \draw[dashed, thick] (0,1) --(1, 2);

  \draw[dashed, thick] (2,2.3) --(1, 2);
  \draw[dashed, thick] (2,2.1) --(1, 2);
  \draw[dashed, thick] (2,1.9) --(1, 2);

  \draw[dashed, thick] (2,1.8) --(1, 1.5);
  \draw[dashed, thick] (2,1.6) --(1, 1.5);
  \draw[dashed, thick] (2,1.4) --(1, 1.5);

  \draw[dashed, thick] (2,1.2) --(1, 1);
  \draw[dashed, thick] (2,1) --(1, 1);
  \draw[dashed, thick] (2,0.8) --(1, 1);

  \draw[dashed, thick] (2,0.6) --(1, 0.5);
  \draw[dashed, thick] (2,0.4) --(1, 0.5);
  \draw[dashed, thick] (2,0.2) --(1, 0.5);

  \draw[dashed, thick] (2,0.1) --(1, 0);
  \draw[dashed, thick] (2,-0.1) --(1, 0);
  \draw[dashed, thick] (2,-0.3) --(1, 0);

 \node[left] at (0,1) {$x$};
\draw (2,1) ellipse (0.5cm and 1.5cm);
\draw (2.2, 1) -- (3,0.5);
 \node[right] at (3,0.5) {any of these vertices close to $x$?};

\end{tikzpicture}
\caption{There is a random number of edges connecting to vertices from each of which there is another
random number of edges emanating -- what's the likelihood of a 2-neighbourhood of $x$ being reached in two steps?}
\end{figure}
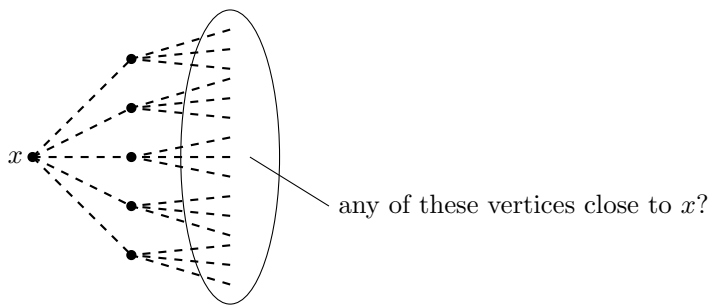
\end{center}

Let now $z$ be fixed
and assume that the randomly added edges adjacent to $v$ are $(z, z_1), \dots, (z,z_k)$ have been added. Note that again
$$ k \sim \mathcal{B}(n-1,p), ~\mbox{where}~\mathcal{B}~\mbox{is the binomial distribution}.$$
Let us fix $(z,z_1)$. We are then interested in the probability that another randomly added edge is connecting $z_1$
and one of the $c$ points at distance at most 2 from $z$. The number of added eges adjacent to $z_1$ is again $\mathcal{B}(n-1, p).$
Using again the reproducing
property of the binomial distribution as above, we are able to simpify the arising expression 
$$ \mathcal{B}\left(\mathcal{B}(n-1,p),\frac{c}{n-1}\right)= \mathcal{B}\left(n-1, \frac{pc}{n-1}\right).$$
This describes the distribution of numbers of tuples $z \rightarrow z_1 \rightarrow y$, where $y$ is at distance at most 2 from $x$. 
However, we are not limited to constructing such a path over $z_1$ since we have many $z_i$ at our disposal.
Indeed, there are  $z$ possible vertices, which implies that the total number of edges $z \rightarrow z_i \rightarrow y$, where $y$ is at distance at most 2 from $z$ is distributed like
a random sum of random variables
$$ X =  \underbrace{\mathcal{B}\left(n-1, \frac{pc}{n-1}\right) + \mathcal{B}\left(n-1, \frac{pc}{n-1}\right) + \dots + \mathcal{B}\left(n-1, \frac{pc}{n-1}\right)}_{ \mathcal{B}(n-1,p)}.$$
A basic result due to Wald \cite{wald1, wald2} is as follows: given a random number $N$ of i.i.d. variables $X$, their sum $S$ satisfies
$$ \mathbb{E} S = \left(\mathbb{E} N\right)\left(\mathbb{E} X\right).$$
In particular, this immediately implies that
$$ \mathbb{E}X = \left( \mathbb{E} \mathcal{B}\left(n-1, \frac{pc}{n-1}\right) \right)\left(\mathbb{E} \mathcal{B}(n-1,p) \right) = pc(n-1)p \leq c p^2 n.$$
Summing over all $n$ vertices gives the result.
\end{proof}

\section{Examples}
\subsection{Setup.} We study a series of classification problems. Since we are interested in the effectiveness of the method and not in the effectiveness of various spectral 
methods, we will fix one of the simplest spectral methods by defining the quadratic form to be
$$ \sum_{i=1}^{n}{t_{ij}f(x_i)f(x_j)} \qquad \mbox{where} \quad t_{ij} = \begin{cases}
-\sum_{k=1 \atop k \neq i}^n{s_{ik}} \qquad &\mbox{if}~i=j \\
s_{ij} \qquad &\mbox{if}~i\neq j, \\
\end{cases}$$
where $s_{ij}$ will be either $p_{ij}$ or $q_{ij}$. The Markov chain $P$ will be created using the classical isotropic Gaussian kernel with a parameter $\varepsilon$ (the precise value will
be explicitely noted in each application)
$$ p_{ij} = \frac{ \exp\left(-\frac{1}{\varepsilon} \|x_i -x_j\|^2_{\ell^2}\right)}{\sum_{k = 1 \atop k \neq j}^{n}{ \exp\left(-\frac{1}{\varepsilon} \|x_k -x_j\|^2_{\ell^2}\right)}}.$$
The matrix $Q$ will be created as above (with $k=2$ for classification examples and $k=5$ for error correction).
As for the embedding, we will always chose the simplest possible one into two-dimensional space
\begin{align*}
\Phi: \left\{x_i\right\}_{i=1}^{n} &\rightarrow \mathbb{R}^2 \\
x &\rightarrow  (\phi_1(x), \phi_2(x)),
\end{align*}
where $\phi_1, \phi_2$ are the first two eigenfunctions.
Each classification problems will be given as a point cloud $(x_{i})_{i=1}^{n} \subset \mathbb{R}^d$
and we shall assume that each coordinate has equal importance: this is done by dividing every coordinate by the standard deviation of that coordinate among all $n$ points.
 We emphasize that we are not at all interested in obtaining competitive result  -- we study the effect of the method (better results could, for example, be immediately obtained
by considering more than two eigenfunctions for an embedding). All data sets have been taken from the UCI machine learning depository.

\subsection{Wine dataset} The challenge is to automatically detect the origin of 178 wines
(each of which is given by the quantities of 13 chemical constitutents) grown by three different cultivars in the same region in Italy. The classical approach (using $\varepsilon = 10$) is very successful
at this task, the preconditioning matrix has little effect. It should be noted that the preconditioner successfully contracts the red and the green region (which is yet again accomplished by regularizing the
diffusion distance within those regions).
\begin{figure}[h!]
\begin{minipage}[b]{0.48\linewidth}
\centering
\includegraphics[width=\textwidth]{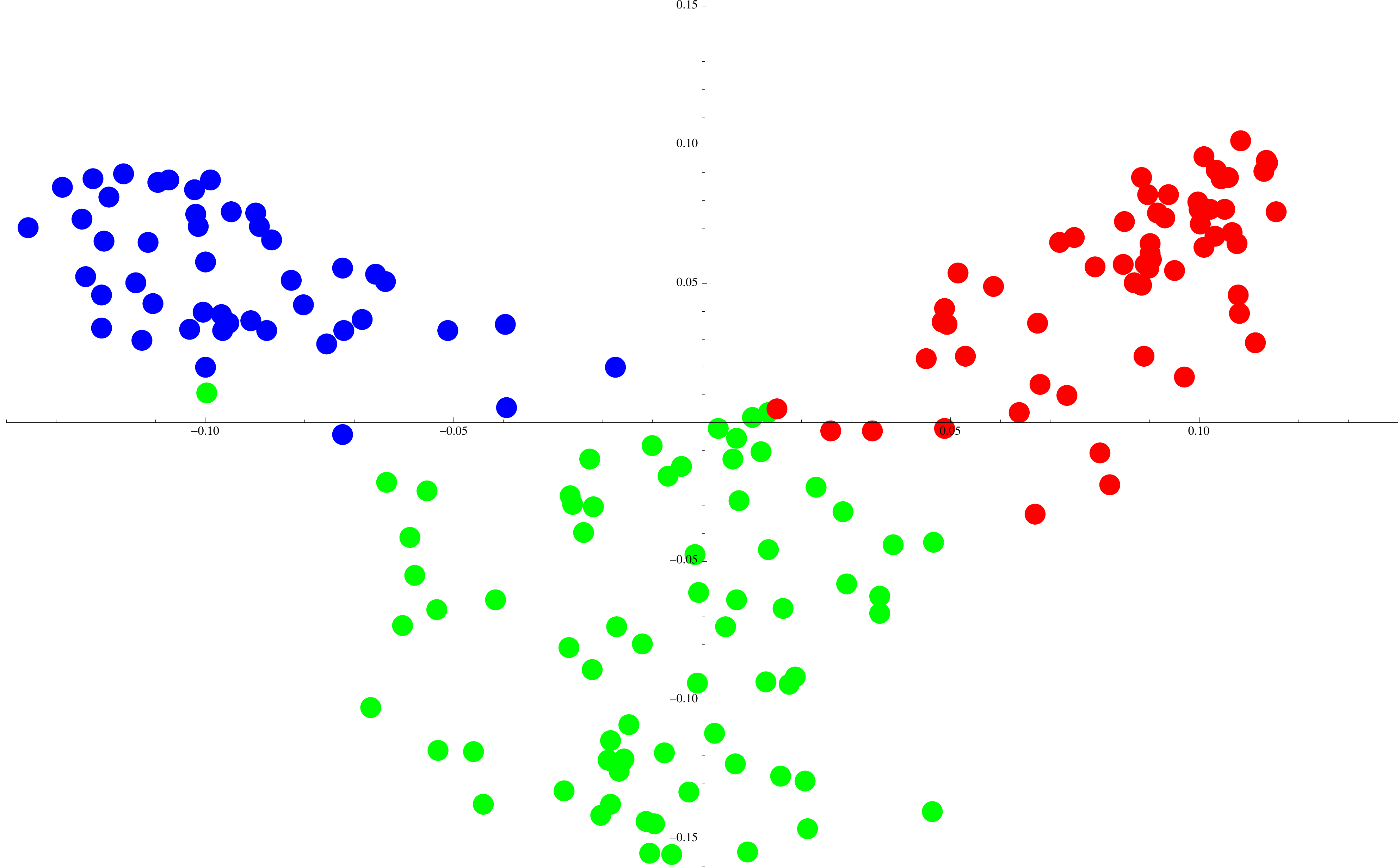}
\caption{Embedding of $P$}
\end{minipage}
\begin{minipage}[b]{0.48\linewidth}
\centering
\includegraphics[width=\textwidth]{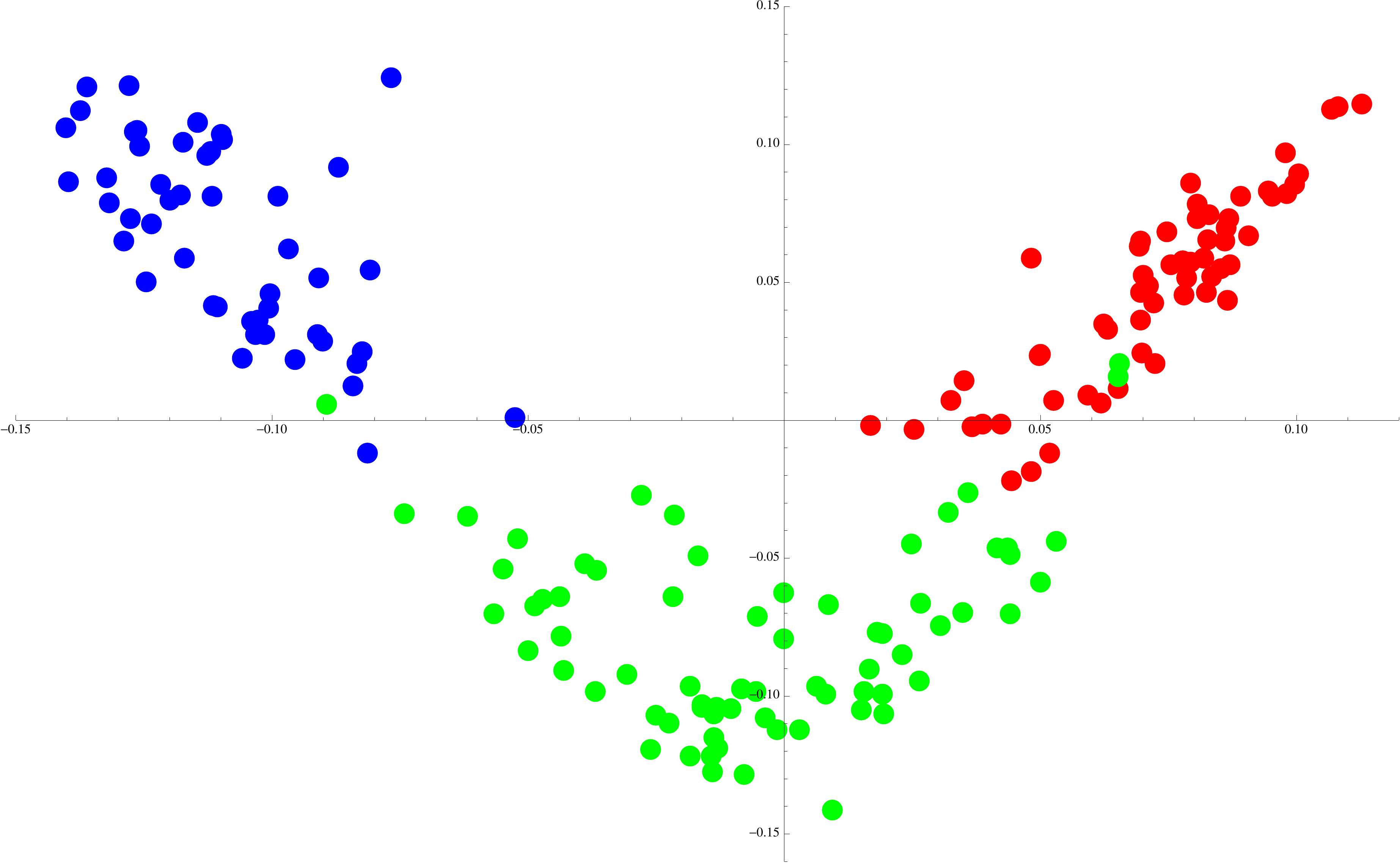}
\caption{Embedding of $Q$}
\end{minipage}
\end{figure}

\subsection{Cleveland Heart Disease dataset}  One is given 303 data samples, each of which gives 13 attributes (age, cholesterol, \dots). The task is to classify the
existence or absence of heart disease. The classical embedding (using $\varepsilon = 10$) manages some separation, the matrix $Q$ decreases the diffusion distance within
the subset of healthy patients (green) which allows for a more uniform embedding. The error rate (using a half space as classifier in both cases) is unaffected.
\begin{figure}[h!]
\begin{minipage}[b]{0.48\linewidth}
\centering
\includegraphics[width=\textwidth]{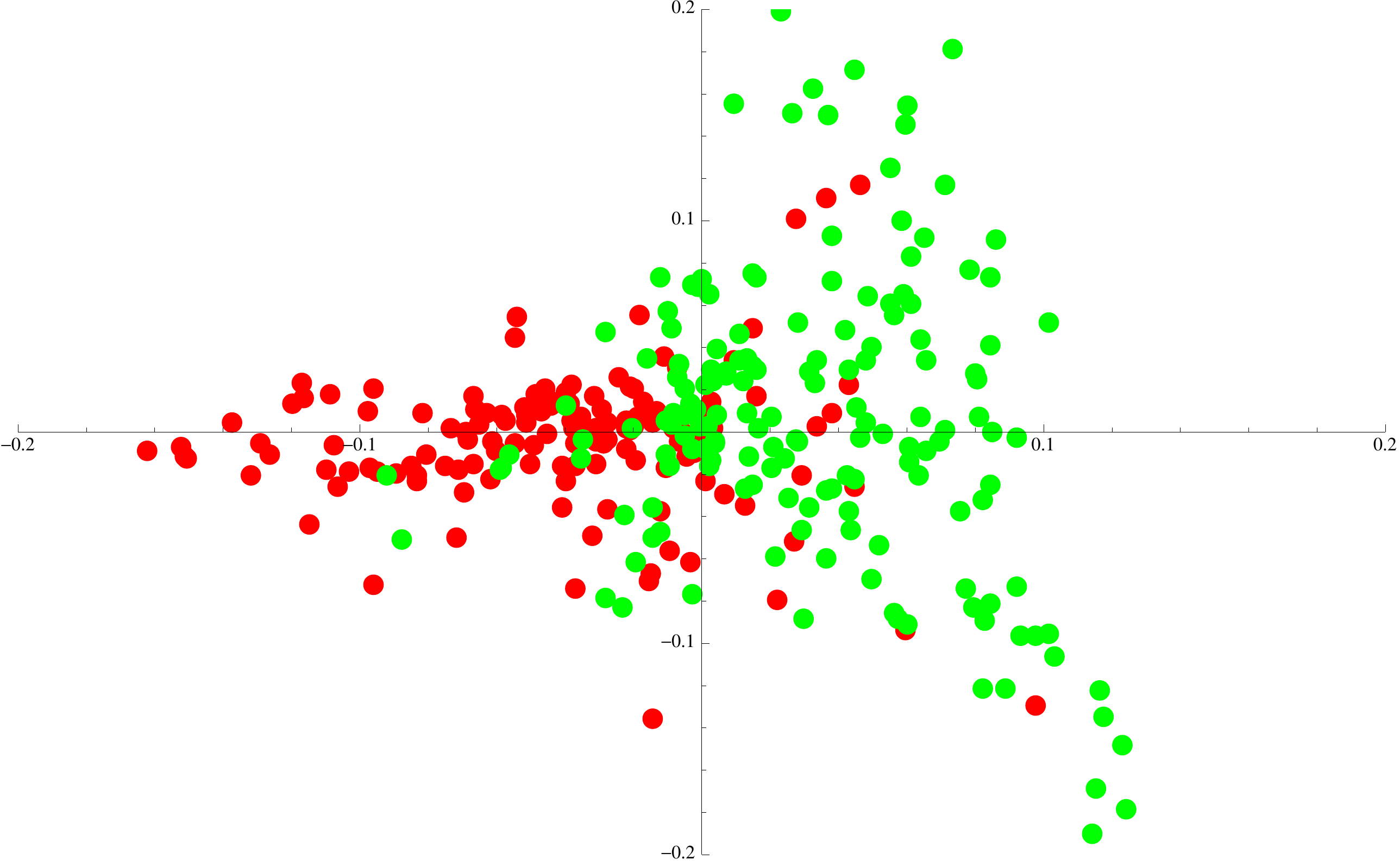}
\caption{Embedding of $P$}
\end{minipage}
\begin{minipage}[b]{0.48\linewidth}
\centering
\includegraphics[width=\textwidth]{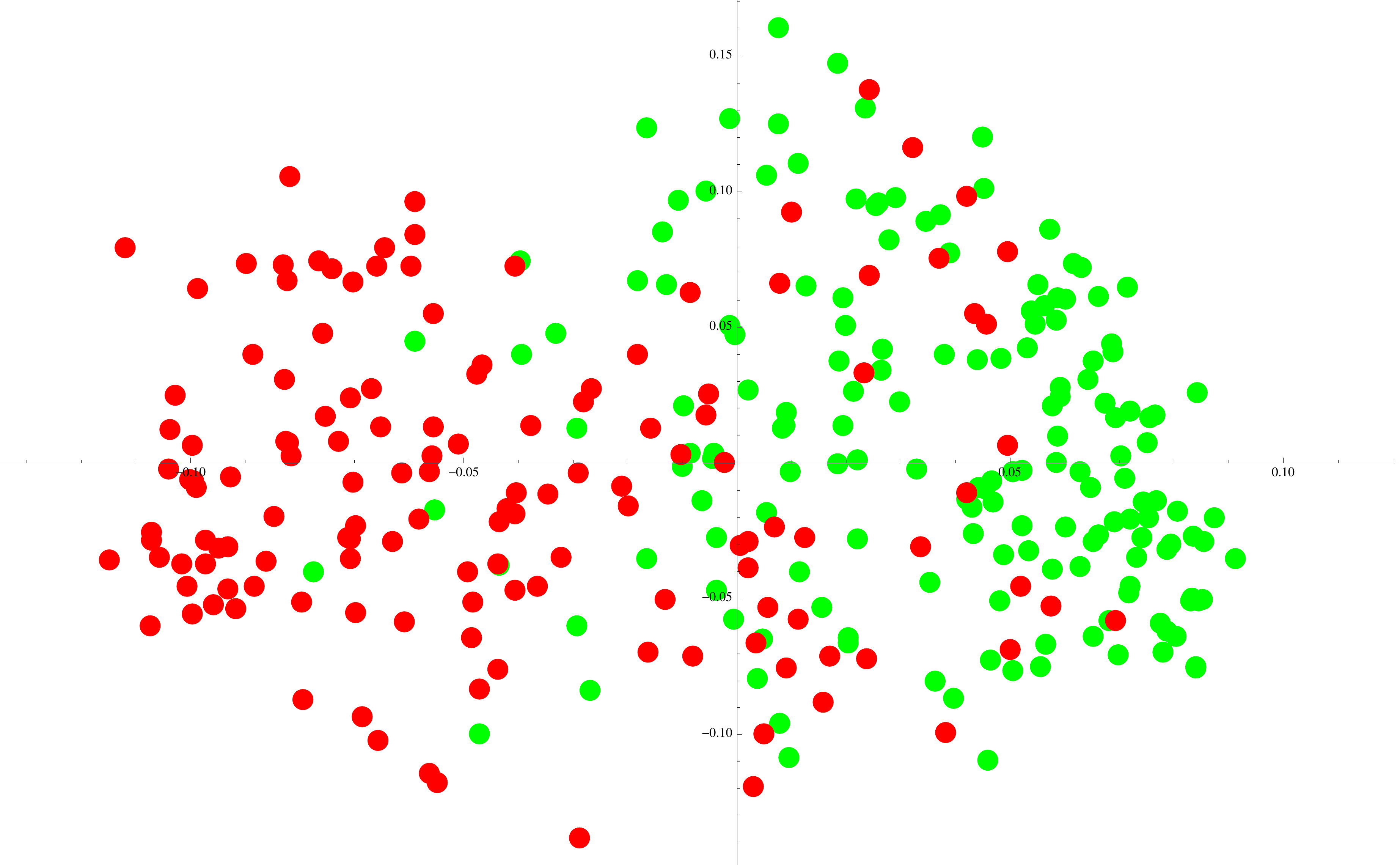}
\caption{Embedding of $Q$}
\end{minipage}
\end{figure}

\subsection{Wisconsin Diagnostic Breast Cancer (WDBC)} The data set contains 569 elements, each of which quantifies 10 aspects of a tumor  --  the task is to identify a tumor
as benign or malign. This is a data set where the classical embedding already works very well ($\varepsilon = 100$): using a half space as classifier, the error rate is a mere 5.9\% and the picture shows a clear separation into two cluster. The preconditioner achieves
a slightly cleaner separation and reduces to error (being again allowed one half space as a classifier) to 4.2\%.

\begin{figure}[h!]
\begin{minipage}[b]{0.48\linewidth}
\centering
\includegraphics[width=\textwidth]{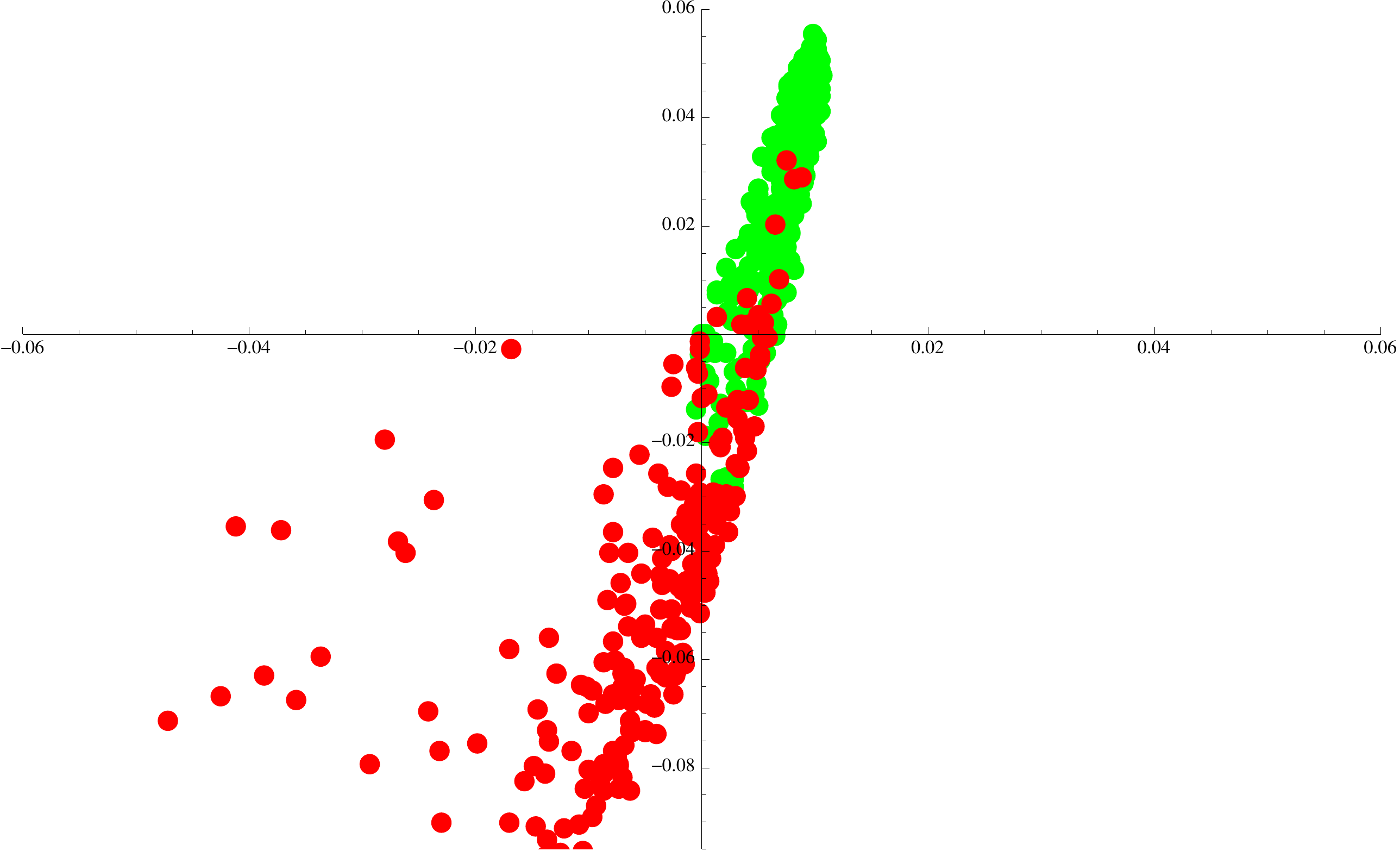}
\caption{Embedding of $P$}
\end{minipage}
\begin{minipage}[b]{0.48\linewidth}
\centering
\includegraphics[width=\textwidth]{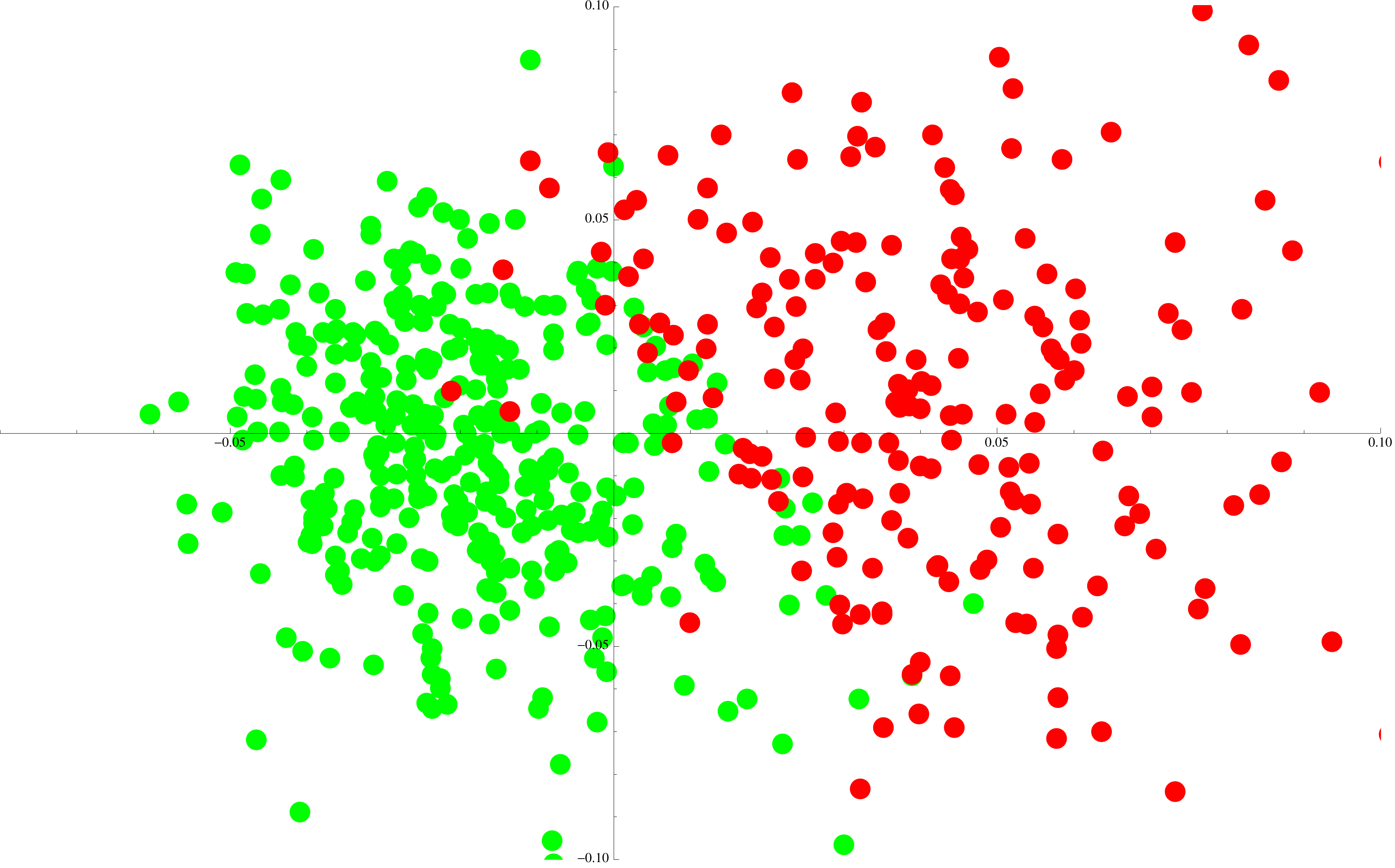}
\caption{Embedding of $Q$}
\end{minipage}
\end{figure}

\subsection{3D example: Ionosphere data set} This is another standard binary classification task. An array of antennas is trying to find structure in the ionosphere and there are 34 possible settings; those settings, which are able to detect structure are 'good' whereas other settings are 'bad'. The dataset is comprised of 351 elements. We observe again the same behavior as above; the unmodified embedding has
a smooth transition between the two points: the method (here with $k=5$) manages to contract the sets of good and bad points, respectively, and increase their distance. Note that the method manages to
contract almost all red outliers! The pictures shown use the canonical 3D embedding
$$ x \rightarrow (\phi_1(x), \phi_2(x), \phi_3(x)).$$ 
\begin{figure}[h!]
\begin{minipage}[b]{0.48\linewidth}
\centering
\includegraphics[width=\textwidth]{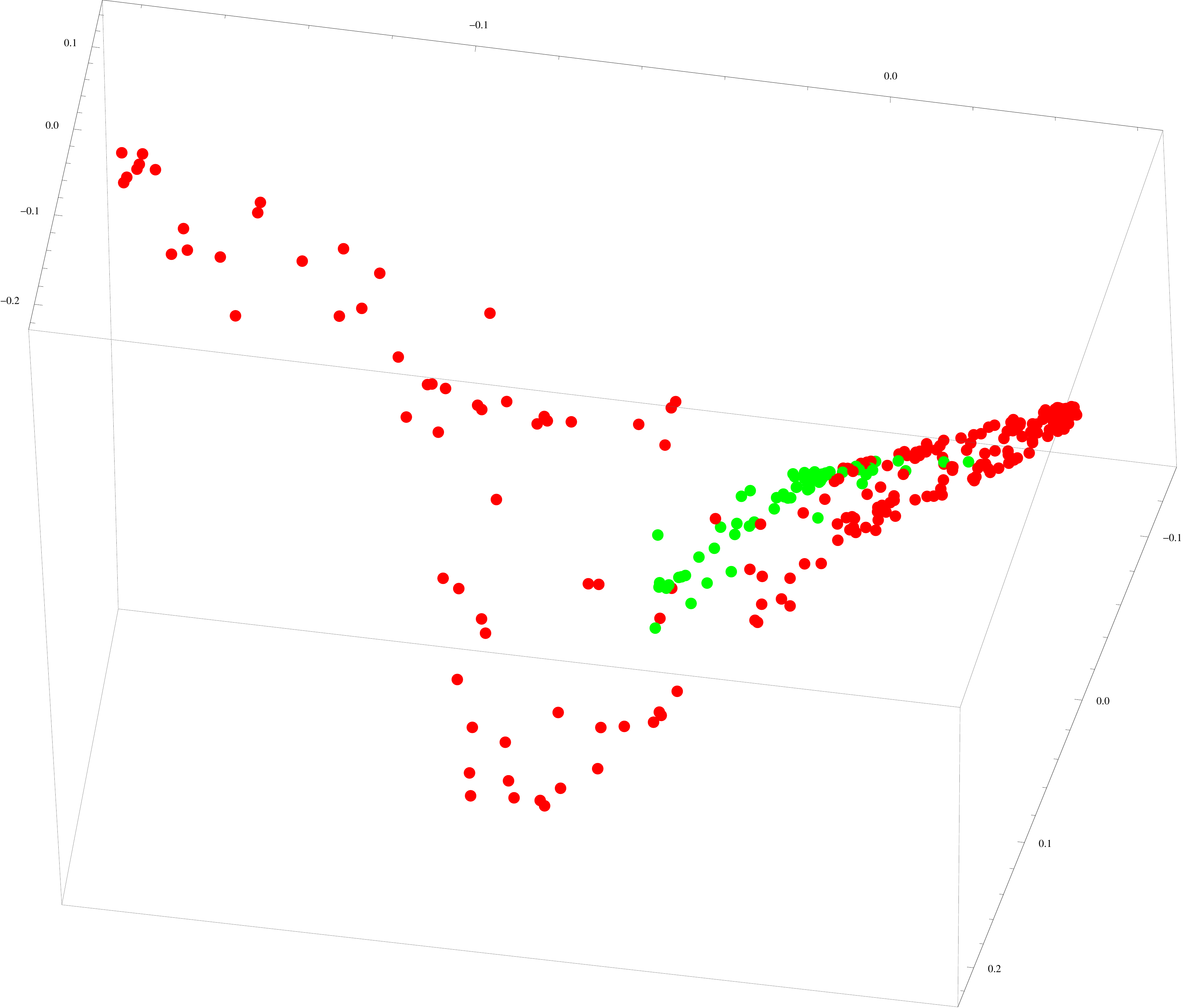}
\caption{Embedding of $P$}
\end{minipage}
\begin{minipage}[b]{0.48\linewidth}
\centering
\includegraphics[width=\textwidth]{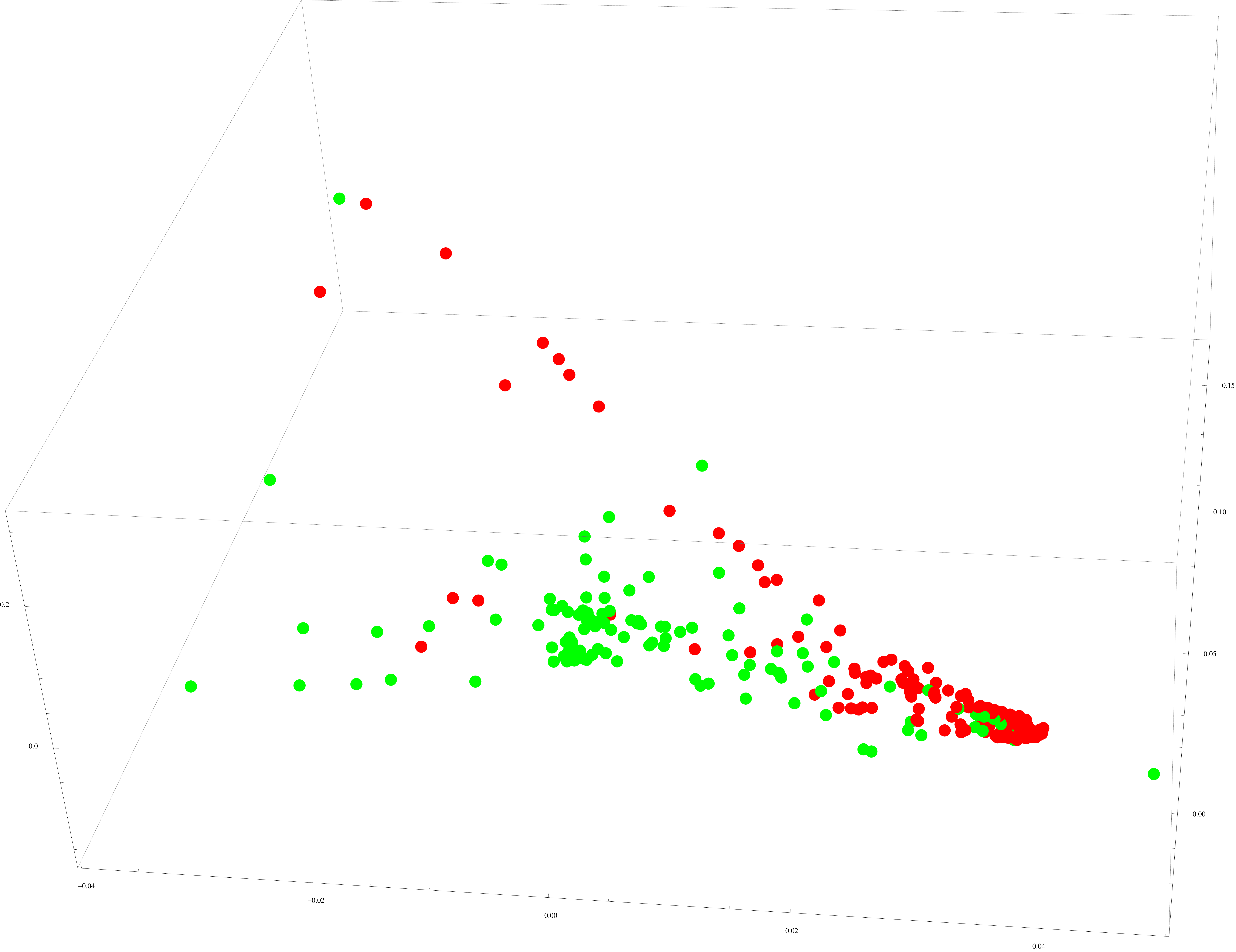}
\caption{Embedding of $Q$}
\end{minipage}
\end{figure}

\subsection{Correcting random errors} Finally, we consider the method with respect to correcting random errors. Consider the set of points in the shape of a circle
$$ \left\{ \left( \cos{\frac{2 \pi j}{100}}, \sin{\frac{2\pi j}{100}} \right): 1 \leq j \leq 100 \right\} \subset \mathbb{R}^2.$$
Spectral methods perform very well and are able to reproduce the underlying shape.
We assume that 
$$ p_{ij} = \frac{ \exp\left(-50 \|x_i -x_j\|^2_{\ell^2}\right)}{\sum_{k = 1 \atop k \neq j}^{n}{ \exp\left(-50 \|x_k -x_j\|^2_{\ell^2}\right)}} $$
and randomize the adjacency matrix by picking $k$ random edges and defining the weight of that edge to be 1. \\

The following example show the same randomly perturbed set as embedded by $P$ and $Q$ (with $k=5$).
 $Q$ is able to maintain the topological structure rather well.
\begin{figure}[h!]
\begin{minipage}[b]{0.47\linewidth}
\centering
\includegraphics[width=\textwidth]{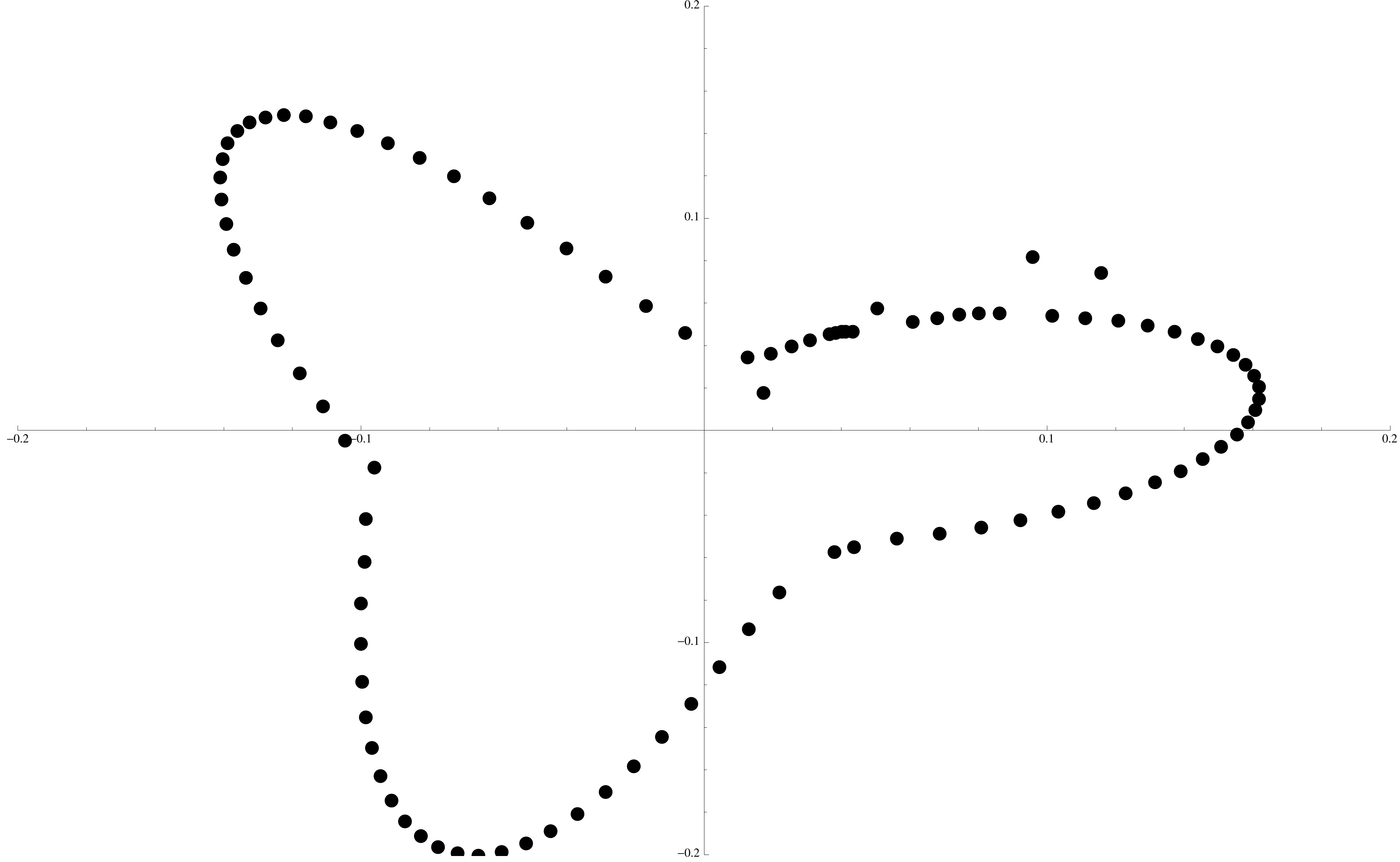}
\caption{Embedding of $P$ (5 random edges).}
\end{minipage}
\hspace{0.1cm}
\begin{minipage}[b]{0.47\linewidth}
\centering
\includegraphics[width=\textwidth]{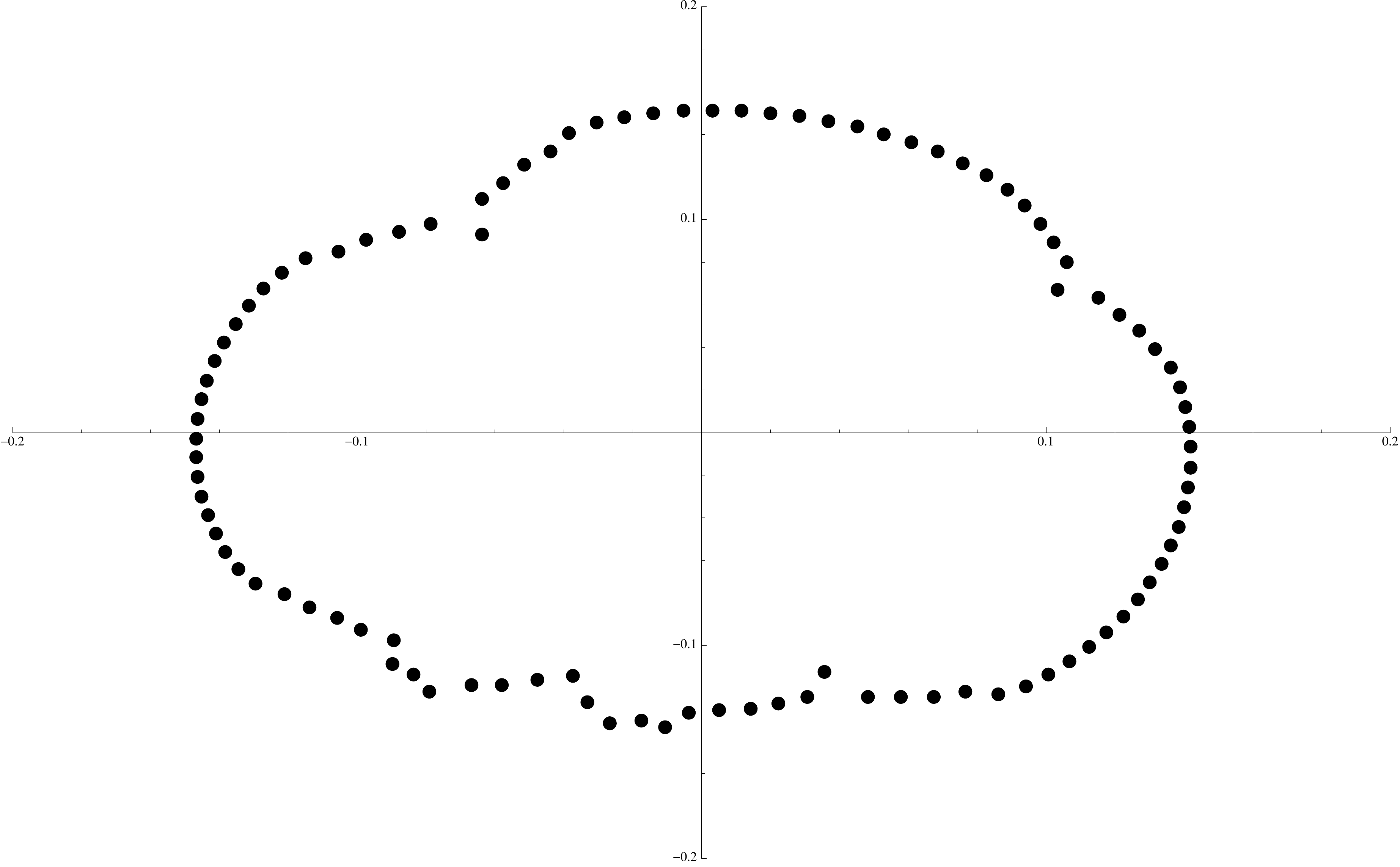}
\caption{Embedding of $Q$ (5 random edges, $k=5$).}
\end{minipage}
\end{figure}

\begin{figure}[h!]
\begin{minipage}[b]{0.47\linewidth}
\centering
\includegraphics[width=\textwidth]{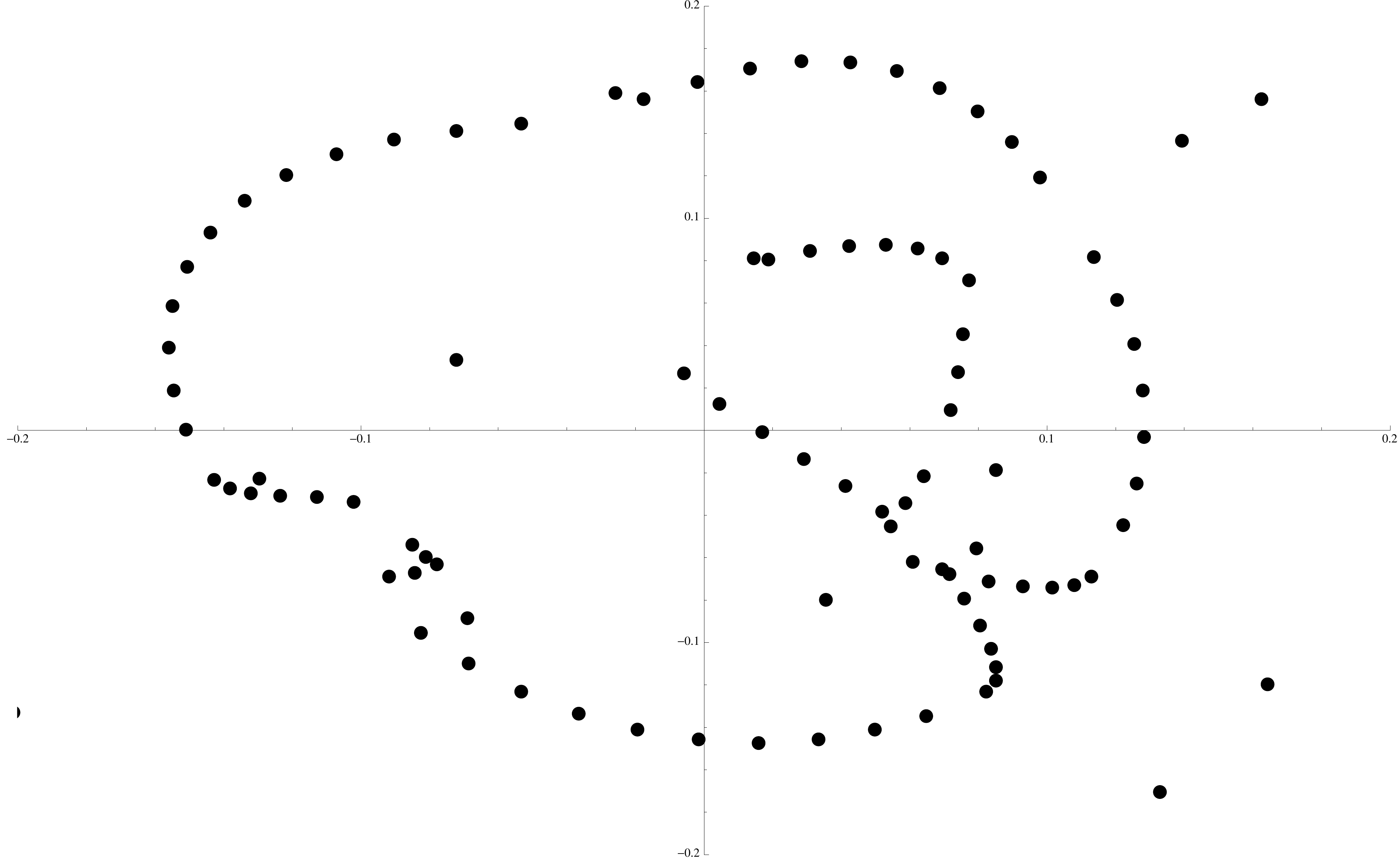}
\caption{Embedding of $P$ (10 random edges).}
\end{minipage}
\hspace{0.1cm}
\begin{minipage}[b]{0.47\linewidth}
\centering
\includegraphics[width=\textwidth]{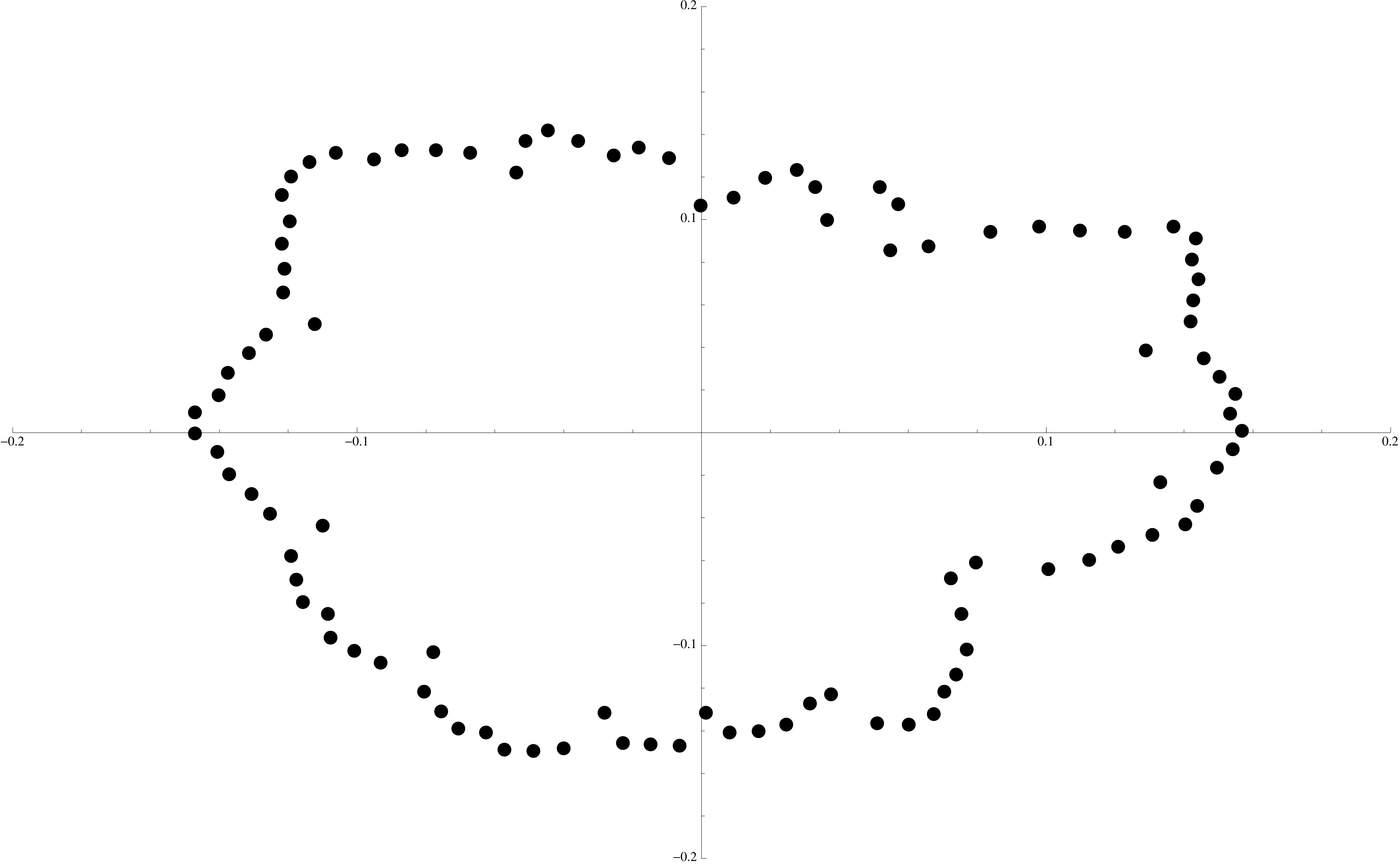}
\caption{Embedding of $Q$ (10 random edges, $k=5$).}
\end{minipage}
\end{figure}

\begin{figure}[h!]
\begin{minipage}[b]{0.47\linewidth}
\centering
\includegraphics[width=\textwidth]{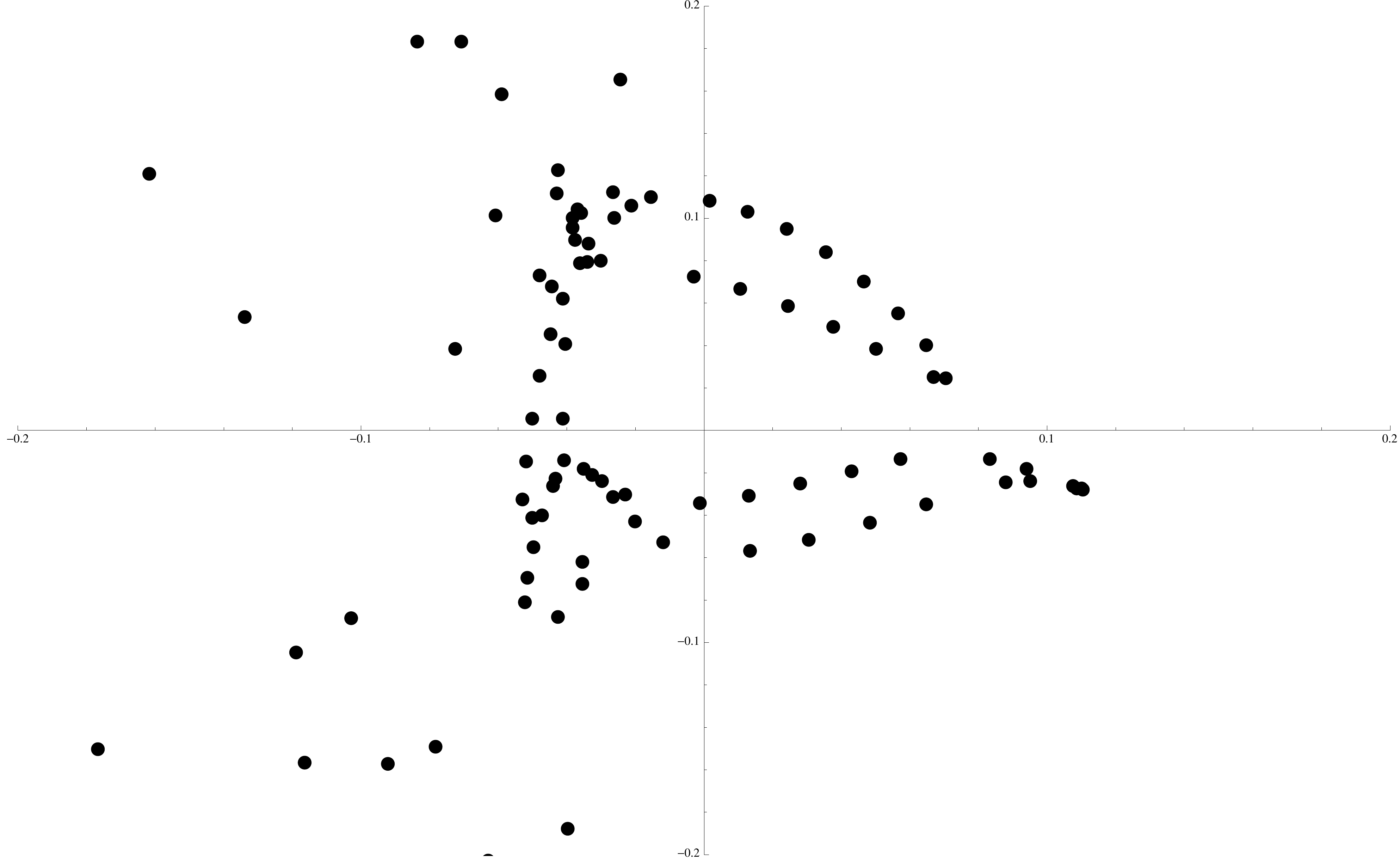}
\caption{Embedding of $P$ (20 random edges).}
\end{minipage}
\hspace{0.1cm}
\begin{minipage}[b]{0.47\linewidth}
\centering
\includegraphics[width=\textwidth]{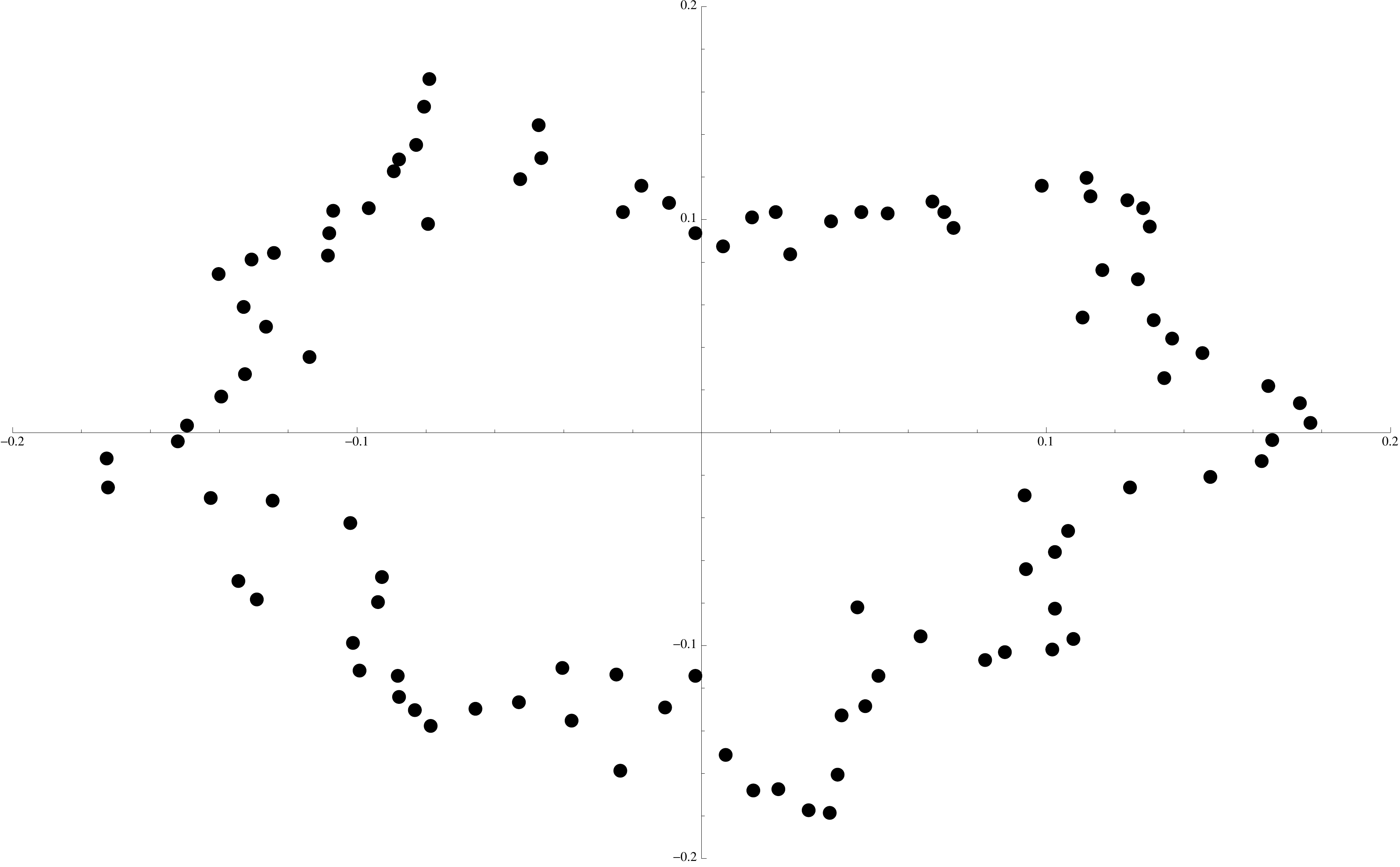}
\caption{Embedding of $Q$ (20 random edges, $k=5$).}
\end{minipage}
\end{figure}

\subsection{Concluding Remark.} We remark that our method is not the unique one with that property; the theorem about error correcting also holds, for example, for the pointwise product
$$ q_{ij} \sim P^*_{ij} \cdot ((P^*)^2)_{ij}.$$
It is certainly conceivable that another implementation of the underlying idea of exploiting local topology could
yield even better results and we consider this to be a very interesting problem.\\

\textbf{Acknowledgement.} The author is indebted to Raphy Coifman for extensive discussions.

\end{document}